\documentclass[journal]{IEEEtran}

\usepackage{amsmath,amssymb,amsthm,graphicx,mathrsfs,url,mathtools}
\usepackage[usenames,dvipsnames]{color}
\usepackage[hidelinks]{hyperref}
\usepackage{amsxtra}
\usepackage{epstopdf}
\usepackage{csquotes}

\usepackage{physics}

\usepackage{dsfont}

\usepackage{comment}

\usepackage[dvipsnames]{xcolor}

\usepackage{ifthen}

\newtheorem{theorem}{Theorem}
\newtheorem{lemma}{Lemma}

\newtheorem{corollary}{Corollary}

\theoremstyle{definition}

\newtheorem{remark}{Remark}

\newtheorem*{theorem*}{Theorem}
\newtheorem*{corollary*}{Corollary}

\DeclareMathOperator{\C}{\mathbb{C}}


\DeclareMathOperator{\cB}{\mathcal{B}}

\DeclareMathOperator{\cD}{\mathcal{D}}

\DeclareMathOperator{\cH}{\mathcal{H}}

\DeclareMathOperator{\cS}{\mathcal{S}}
\DeclareMathOperator{\cT}{\mathcal{T}}

\DeclareMathOperator{\1}{\mathds{1}}

\DeclareMathOperator{\supp}{supp}



\newcommand{\eps}{\varepsilon}
\newcommand{\II}{\mathds{1}}

\begin{document}

\title{Continuity bounds for quantum entropies arising from a fundamental entropic inequality}

\author{Koenraad~Audenaert,~Bjarne~Bergh,~Nilanjana~Datta,~Michael~G~Jabbour,~{\'A}ngela~Capel~and~Paul~Gondolf
\thanks{K. Audenaert is in the Institut f\"ur Theoretische Physik, Albert-Einstein-Allee 11, Universit\"at Ulm 89069, Germany (e-mail: koenraad.audenaert@uni-ulm.de).

B. Bergh is in the Department of Applied Mathematics and Theoretical Physics, Centre for Mathematical Sciences, University of Cambridge, Cambridge CB3 0WA, United Kingdom (e-mail: bb536@cam.ac.uk).

N. Datta is in the Department of Applied Mathematics and Theoretical Physics, Centre for Mathematical Sciences, University of Cambridge, Cambridge CB3 0WA, United Kingdom (e-mail: n.datta@damtp.cam.ac.uk).

M. G. Jabbour is at SAMOVAR, T\'el\'ecom SudParis, Institut Polytechnique de Paris, 91120 Palaiseau, France, and in the Centre for Quantum Information and Communication, \'Ecole polytechnique de Bruxelles, CP 165/59, Universit\'e libre de Bruxelles, 1050 Brussels, Belgium (e-mail: mjabbour@telecom-sudparis.eu).

{\'A}. Capel is in the Department of Applied Mathematics and Theoretical Physics, Centre for Mathematical Sciences, University of Cambridge, Cambridge CB3 0WA, United Kingdom, and at Fachbereich Mathematik, Universit\"{a}t T\"{u}bingen, 72076 T\"{u}bingen, Germany (e-mail: ac2722@cam.ac.uk).

P. Gondolf is at Fachbereich Mathematik, Universit\"{a}t T\"{u}bingen, 72076 T\"{u}bingen, Germany (e-mail: paul.gondolf@uni-tuebingen.de).}}

\maketitle

\begin{abstract}
    We establish a tight upper bound for the difference in von Neumann entropies between two quantum states, $\rho_1$ and $\rho_2$. This bound is expressed in terms of the von Neumann entropies of the mutually orthogonal states derived from the Jordan-Hahn decomposition of the difference operator $(\rho_1 - \rho_2)$. This yields a novel entropic inequality that implies the well-known Audenaert-Fannes (AF) inequality. In fact, it also leads to a refinement of the AF inequality.
    We employ this inequality to obtain a uniform continuity bound for the quantum conditional entropy of two states whose marginals on the conditioning system coincide.
    We additionally use it to derive a continuity bound for the quantum relative entropy in both variables. 
    Interestingly, the fundamental entropic inequality is also valid in infinite dimensions. 
\end{abstract}

\IEEEpeerreviewmaketitle

\section{Introduction}
Entropies play a crucial role in both classical and quantum information theory since they characterize optimal rates of various information-processing tasks. For example, for a discrete memoryless classical information source, its optimal rate of asymptotically reliable compression (i.e., its data compression limit) is given by its Shannon entropy~\cite{Shannon.1948}. For the case of a quantum information source, in an analogous setting, the data compression limit is given by its von Neumann entropy~\cite{VonNeumann.2013}. For a bipartite pure state, the von Neumann entropy of one of its marginals can also be used to quantify entanglement.

 There are other entropic quantities corresponding to bipartite systems, e.g.~the conditional entropy and the mutual information. The quantum (Umegaki) relative entropy and the Kullback-Leibler divergences act as parent quantities for all these entropies in the quantum and classical setting, respectively. Studying mathematical properties of all these quantities (which are also often referred to as {\em{information measures}}) has been the focus of much research. An important property of these quantities, which is of relevance in the study of various information-processing tasks, is that of {\em{continuity}}. For any such entropic quantity denoted by $f$, this property pertains to the following question: {\em{Given two quantum states, $\rho_1$ and $\rho_2$, that are close to each other with respect to a chosen distance measure, say $t$ (e.g.~the trace distance), how close is $f(\rho_1)$ to $f(\rho_2)$?}} In other words, it amounts to finding estimates of 
\begin{equation*}
    \operatorname{sup} \{ |f(\rho_1) - f (\rho_2)| \, : \,  t(\rho_1 , \rho_2) \leq \eps  \} \, .
\end{equation*}
A well-known continuity bound for the von Neumann entropy, $S(\rho) := - \Tr(\rho \log \rho)$, of a quantum state $\rho$, with respect to the trace distance, is referred to as the Audenaert-Fannes (AF) inequality~\eqref{AF} \cite{Fannes.1973,Petz.2008,Audenaert.2007}: For two quantum states $\rho_1, \rho_2$ (i.e.~positive semi-definite operators of unit trace) on a finite-dimensional Hilbert space, $\cH$,  with dimension $d$ that are $\eps$ close in trace distance, i.e.~$\frac{1}{2}||\rho_1 - \rho_2||_1 = \eps$, for some $\eps \in [0, 1]$, it holds that
\begin{align}\label{AF}
    |S(\rho_1) - S(\rho_2)| &\leq  \eps \log(d-1) + h(\eps) \, .
\end{align}

Similarly, this question was also studied for the conditional entropy, which is given by $S(A|B)_\rho = S(\rho_{AB}) - S(\rho_B)$ for a bipartite state $\rho_{AB}$,  
with $\rho_B$ being the marginal on the system $B$. Alicki and Fannes derived the first continuity bounds for this quantity in \cite{Alicki.2004}, with a later improvement by Winter in \cite{Winter.2016},  strengthening it to 
\begin{equation}
    |S(A|B)_{\rho_1} - S(A|B)_{\rho_2}| \le 2 \log d_A + (1 + \eps) h\Big(\frac{\eps}{1 + \eps}\Big) \, ,
\end{equation}
where $h$ denotes the binary entropy.
The importance of these results resides not only in their numerous applications but in the generality of the method employed to derive them, which is universal for multiple entropic quantities. This method was coined \textit{AFW method} by Shirokov in \cite{Shirokov.2020, Shirokov.2022} after the original authors, and subsequently named \textit{ALAFF method} (for \enquote{Almost Locally AFFine}) in \cite{Bluhm.2023.1, Bluhm.2023.3} due to the main property exploited in it. In the past few years, it has been multiply used not only to derive better continuity bounds for quantities derived from the Umegaki relative entropy in finite \cite{Synak.2006,Gour.2021,Rastegin.2011,Reeb.2015} and infinite dimensions \cite{Shirokov.2019,Shirokov.2018,Shirokov.2022.2}, but also for other quantities such as Rényi divergences \cite{Mosonyi.2011,Bluhm.2023.2}, the Belavkin-Staszewski relative entropy \cite{Bluhm.2023.1}, the fidelity \cite{Gilyen.2022}, and more.

In this paper, we introduce a new upper bound on $S(\rho_1) - S(\rho_2)$ which will turn out to imply the AF inequality \eqref{AF}, and also lead to a uniform continuity bound for the quantum conditional entropy when the marginals on the conditioning system agree, and to a continuity bound on the quantum (Umegaki) relative entropy. Let us introduce the inequality in various equivalent forms. 
Consider the Jordan-Hahn decomposition of the difference $\rho_1 - \rho_2$:
\begin{equation*}
     \rho_1 - \rho_2 = \Delta_+ - \Delta_- \, ,
\end{equation*}
 where $\Delta_\pm$ are positive semi-definite operators of orthogonal supports (which we write as $\Delta_+ \perp \Delta_-$). We can express this difference in terms of quantum states $\rho_\pm$, namely, $\rho_1 - \rho_2 = \eps \rho_+ - \eps\rho_-$, with the states $\rho_\pm$ being defined through the relation:  $\Delta_\pm = \eps \rho_\pm$ if $\eps > 0$, and for $\eps = 0$ we can make an arbitrary choice of states $\rho_+$ and $\rho_-$ such that $\rho_+ \perp \rho_-$. In this paper, we prove that the difference of the von Neumann entropies of the states $\rho_1$ and $\rho_2$ can be expressed in terms of the entropies of the states $\rho_\pm$ via the following inequality:
\begin{align}\label{Inineq1}
    S(\rho_1) - S(\rho_2) \leq \eps S(\rho_+) - \eps S(\rho_-) + h(\eps)\,,
\end{align}
where $h(\eps) := - \eps \log \eps - (1-\eps) \log (1-\eps)$ denotes the binary entropy.
Moreover, this bound is {\em{tight}}, in the sense that, for any $\eps \in [0,1]$, there exist pairs of states for which the bound is saturated.
This inequality can be cast in various equivalent forms. Firstly, note that interchanging $\rho_1$ and $\rho_2$ results in interchanging $\Delta_+$ and $\Delta_-$. Hence, we also have
\begin{align}\label{Inineq2}
    S(\rho_2) - S(\rho_1) \leq \eps S(\rho_-) - \eps S(\rho_+) + h(\eps)\,.
\end{align}
As an immediate consequence of \eqref{Inineq1} and \eqref{Inineq2} one gets
\begin{align}\label{Inineq3}
    |\left(S(\rho_1) - S(\rho_2)\right) - \eps \left(S(\rho_+) - \eps S(\rho_-)\right)| \leq h(\eps) \, .
\end{align}
We see that (\ref{Inineq1}) is more fundamental than the AF inequality~\eqref{AF}.  This is because from (\ref{Inineq1}) it follows that (assuming without loss of generality that $S(\rho_1) \le S(\rho_2)$),
\begin{equation}
    \begin{aligned}
        |S(\rho_1) - S(\rho_2)| &\leq  \eps S(\rho_+) - \eps S(\rho_-) + h(\eps)\\
        &\leq \eps S(\rho_+)+ h(\eps)\\
        &\leq  \eps \log(d-1) + h(\eps) \, ,
    \end{aligned}
\end{equation}
which is the right-hand side of \eqref{AF}. In the second line, we have used the non-negativity of the von Neumann entropy, and the last line follows from the fact that $\rank \, \rho_+ \leq d - 1$. 

To gain some intuition behind our new entropic inequality (considered in any of its equivalent forms), let us first consider some simple cases where it can be easily seen to hold.   
\smallskip

\noindent
{\em{Case 1:}} The inequality clearly holds when $\rho_1$ and $\rho_2$ are states of qubits, i.e.~when $\cH \simeq \C^2$. This is because in this case, $\rho_\pm$ are pure states and hence $S(\rho_\pm) =0$.
Therefore, (\ref{Inineq3}) reduces to 
\begin{equation*}
    |S(\rho_1) - S(\rho_2)| \leq h(\eps) \, ,
\end{equation*}
which is just the AF inequality (\ref{AF}) for the case of qubits (i.e.~$d=2$).
\smallskip

\noindent
{\em{Case 2:}} The inequality holds whenever $\rho_1 \geq \eps \rho_+$, which in turn guarantees that $\rho_2 \geq \eps \rho_-$, since via the Jordan-Hahn decomposition we have that $\rho_1 - \eps \rho_+ = \rho_2 - \eps \rho_-.$
The inequality can then be proved as follows. Note that $\Tr (\rho_1 - \eps \rho_+) = 1- \eps = \Tr (\rho_2 - \eps \rho_-)$. Note first that if $\eps = 1$, then $\rho_1 \perp \rho_2$ and so $\rho_+ = \rho_1$, $\rho_- = \rho_2$ and so the inequality holds trivially. If $\eps < 1$, then let us define the quantum state
\begin{align}
    \omega := \frac{\rho_1 - \eps \rho_+}{1 - \eps} \equiv \frac{\rho_2 - \eps \rho_-}{1 - \eps} \, .
\end{align}
Then, we can write convex decompositions of the states $\rho_1$ and $\rho_2$ as follows:
\begin{align}   
    \rho_1 &= \eps \rho_+ + (1-\eps) \omega \, ,\label{s1}\\
    \rho_2 &= \eps \rho_- +  (1-\eps) \omega \, .\label{s2}
\end{align}
The property of ``almost convexity'' of the von Neumann entropy \cite{Kim.2013,Kim.2014} applied to (\ref{s1}) implies that
\begin{align}\label{r1}
    S(\rho_1) \leq \eps S(\rho_+) + (1-\eps) S(\omega) + h(\eps) \, . 
\end{align}
Moreover, the concavity of the von Neumann entropy applied to (\ref{s2}) implies that
\begin{align}\label{r2}
    S(\rho_2) \geq \eps S(\rho_-) +  (1-\eps) S(\omega) \, . 
\end{align}
Then from (\ref{r1}) and (\ref{r2}) we immediately obtain the desired inequality (\ref{Inineq1}).
\begin{align}
     S(\rho_1) - S(\rho_2) \leq \eps S(\rho_+) - \eps S(\rho_-) + h(\eps) \, .
\end{align}
\smallskip

\noindent
\begin{remark}
    It can also be easily seen that the inequality holds when $\rho_1$ and $\rho_2$ commute, as this is a special case of {\em{Case 2}}. Note first that if $\rho_1$ and $\rho_2$ commute then $\rho_1$ and $\rho_2$ also commute with $\rho_\pm$. Also,  the states $\rho_1$ and $\rho_2$ are then simultaneously diagonalizable and hence the operator inequality $\rho_1 \geq \eps \rho_+$ reduces to an inequality between eigenvalues of $\rho_1$ and $\rho_+$. Let us fix some ordering of the vectors in this simultaneous eigenbasis, and then write the eigenvalue corresponding to the $i^{th}$ basis vector as $\lambda_i(\sigma)$ for any state $\sigma$ that is diagonal in this basis. Then the operator inequality $\rho_1 \geq \eps \rho_+$ reduces to
    \begin{align} 
    \lambda_i(\rho_1) \geq \eps  \lambda_i(\rho_+) \quad \forall \,\, i \in [d],
    \end{align}
    where $[d]$ denotes the index set of $d$ elements. Let $p_i :=  \lambda_i(\rho_1)$ and $q_i :=  \lambda_i(\rho_2),$ for all $i \in [d]$, and define the sets
    \begin{align}
        \begin{aligned}
            I:= \{ i \in [d]\,:\, p_i \geq q_i\} \, , \\
            I^c:= \{ i \in [d]\,:\, p_i < q_i\} \, .
        \end{aligned}
    \end{align}
    Then $\lambda_i(\rho_1 - \eps \rho_+) = \lambda_i(\rho_1) - \eps \lambda_i(\rho_+),$ where $\lambda_i(\rho_+) = p_i - q_i$ for all $i \in I$ and is equal to zero else. Hence, we have that for all $i \in I,$ $\lambda_i(\rho_1 - \eps \rho_+) = (1-\eps) p_i + \eps q_i \geq 0 $ and for all $i \in I^c$, $\lambda_i(\rho_1 - \eps \rho_+) = p_i \geq 0$. Thus the required inequality of {{Case 2}}, namely, $\rho_1 \geq \eps \rho_+$, holds in this case.
\end{remark}
\smallskip

\noindent
{\bf{Layout of the paper:}} 
Our main result, namely the above-mentioned entropic inequality, is stated in Theorem~\ref{Thm1} of Section~\ref{sec:main}, and a sharper version of it is stated in Theorem~\ref{Thm-sharp}. An extension of this inequality to conditional entropies is given in Theorem~\ref{thm:curious-condl}.  In Section~\ref{sec:key-lemmas} we state and prove a few key lemmas that we employ in the proof of the above theorems and of subsequent results. The proof of Theorem~\ref{Thm-sharp} is given in Section~\ref{sec:proofThm1}. In Section~\ref{sec:ref-AF} we use our fundamental entropic inequality to state and prove a refined version of the AF inequality~(\ref{AF}); see Theorem~\ref{thm-ref-AF}. In Section~\ref{sec:applications}, we apply Theorem~\ref{Thm1} to obtain a uniform continuity bound for the conditional entropy whenever the marginals on the second system agree, and a continuity bound on the quantum relative entropy. These are stated in Theorem~\ref{thm:continuity-conditional-entropy} and  Corollary~\ref{cor:RelEntbothinputs}, respectively, and their proofs are presented in the same section. We end the paper with an extension of our fundamental entropic inequality from Theorem~\ref{Thm1} to infinite dimensions in Section \ref{sec:infinite}.

\section{Main Results}
\label{sec:main}

For the majority of this paper, with the exception of Section \ref{sec:infinite} at the end where we deal with infinite-dimensional Hilbert spaces, we restrict attention to a finite-dimensional Hilbert space $\cH$ of dimension $d$. Let $\cB(\cH)$ denote the algebra of linear operators acting on $\cH$, and $\cB_{\operatorname{sa}}(\cH)$ denote the subset of self-adjoint ones. The set of quantum states (density matrices), i.e.~positive semi-definite operators of unit trace is denoted by $\cD(\cH) \subset \cB_{\operatorname{sa}}(\cH)$, and its subset of positive definite operators of unit trace by $\cD_+(\cH) $. For $X \in \cB(\cH)$, we denote the kernel of $X$ as $\ker X = \{\ket{\psi} \in \cH \;:\; X\ket{\psi} = 0\}$ and its support by $\supp X = (\ker X)^\perp$. Note that when writing $A \le B$ for $A, B \in \cB_{\operatorname{sa}}(\cH)$ we refer to the Loewner partial order. The norms on $\cB(\cH)$ that we use are the trace- or one-norm $\norm{\,\cdot\,}_1$ and the operator- or infinity-norm $\norm{\,\cdot\,}_\infty$ both of which are members of the wider family of Schatten-p-norms $\norm{A}_p = \Tr((A^*A)^{p/2})^{1/p}$ for $p \in [1, \infty)$ (where $\norm{\,\cdot\,}_\infty$ corresponds to the limit $p \to \infty$). The trace distance between two density matrices $\rho_1, \rho_2 \in \cD(\cH)$ is given by $\frac{1}{2}\norm{\rho_1 - \rho_2}_1$.

For any vector $\underline{p} \in \mathbb{R}^d$ of non-negative entries (not necessarily a normalized vector), we define its Shannon entropy $H(\underline{p})$ as $H(\underline{p}) := \sum_i \eta(p_i) := - \sum_i p_i \log p_i$. Additionally, for any positive semi-definite operator $
A \in \cB_{\operatorname{sa}}(\cH)$, $A \geq 0$, we define its von Neumann entropy as $S(A) := - \Tr(A \log A)$.  The quantum (Umegaki) relative entropy of a state $\rho$ with respect to a positive semi-definite operator $A$ is given by \begin{align}
    D(\rho \Vert A) &= \begin{cases}
        \Tr(\rho \log \rho - \rho \log A) & \text{if } \ker A \subseteq \ker \rho,\\
        \infty & \text{else}. 
    \end{cases}
\end{align}
For a bipartite state $\rho_{AB} \in \cD(\cH_A \otimes \cH_B)$, the conditional entropy of the system $A$ given the system $B$ is given by $S(A|B)_\rho:= S(\rho_{AB}) - S(\rho_B)$, where $\rho_B= \Tr_A \rho_{AB}$ denotes the reduced state (i.e.~marginal) of the system $B$. It can be expressed in terms of a relative entropy as follows:
\begin{align}
\label{condrel}
    S(A|B)_\rho & = - D(\rho_{AB}\Vert \1_A \otimes \rho_B) \nonumber \\
    & = \max_{\nu_{B} \in \cD(\cH_B)} \left[-D(\rho_{AB} \| \1_A \otimes \nu_{B})\right],
\end{align}
where $\1_A$ denotes the identity operator on the system $A$.
 We also employ the max-relative entropy~\cite{Datta.2009} which is defined as follows\footnote{The definition in the original paper~\cite{Datta.2009}  is  $D_{\max}(\rho \| \sigma):= \inf \{ \lambda >0 \, : \, \rho \leq 2^\lambda \sigma \} $. However, we consider here a slightly modified version since we are using natural logarithms throughout the whole text.}: 
 \begin{equation}
     D_{\max}(\rho \| \sigma):= \inf \{ \lambda >0 \, : \, \rho \leq e^\lambda \sigma \} . \label{dmax}
 \end{equation}
 Note that throughout this paper, we use $\log$ to denote the natural logarithm.

We are now in position to state our main results.

\begin{theorem}
\label{Thm1}
    Let $\rho_1, \rho_2 \in \cal{D}(\cal{H})$, with $\dim \cH = d$, such that $\frac{1}{2}\norm{\rho_1 - \rho_2}_1 = \eps$, for some $\eps \in [0,1]$. 
    Let $\rho_{\pm}$ be the normalized Jordan-Hahn decomposition of $(\rho_1-\rho_2)$ as described above, i.e. 
\begin{align}
        \rho_1 - \rho_2 &= \eps \rho_+ - \eps\rho_- \, ,
        \label{two}    \end{align}
        where $\rho_\pm \in \cal{D}(\cal{H}_A \otimes \cal{H}_B)$ and $\rho_+ \perp \rho_-$.
    Then
    \begin{align}
        S(\rho_1) - S(\rho_2) \leq \eps S(\rho_+) - \eps S(\rho_-) + h(\eps) \,.
        \label{ineq1}
    \end{align}
    Moreover, the inequality is tight.
\end{theorem}
\begin{remark}
     To see that the bound~\eqref{ineq1} is tight, for every $\eps \in [0,1]$ one can simply consider the following commuting states $\rho_1$ and $\rho_2$:
\begin{equation}
    \rho_1 = (1-\eps) \ketbra{\psi} + \frac{\eps}{d-1} \left( \mathds{1} - \ketbra{\psi} \right) \quad \text{and} \quad \rho_2 = \ketbra{\psi} \, ,
\end{equation}
where $\ket{\psi}$ is any pure state, while $\mathds{1}$ denotes the identity operator in $\cB(\cH)$.
\end{remark}
\bigskip

\noindent We prove this theorem in the next section. The above theorem extends to conditional entropies for bipartite states if the condition given in (\ref{condition}) below holds. This result is stated in Theorem~\ref{thm:curious-condl}.
\begin{theorem}
\label{thm:curious-condl}
    Let $\rho_1, \rho_2 \in \cal{D}(\cal{H}_A \otimes \cal{H}_B)$ and $\frac{1}{2}\norm{\rho_1 - \rho_2}_1 = \eps$, for some $\eps \in [0,1]$. 
    Let $\rho_{\pm}$ be the normalized Jordan-Hahn decomposition of $(\rho_1-\rho_2)$ as described above, i.e. 
\begin{align}
        \rho_1 - \rho_2 &= \eps \rho_+ - \eps\rho_- \, ,
        \label{two2}    \end{align}
        where $\rho_\pm \in \cal{D}(\cal{H}_A \otimes \cal{H}_B)$ and $\rho_+ \perp \rho_-$. Further, assume that the operator $K$ defined below is positive semi-definite, i.e.~
    \begin{align}
    \label{condition}
        K:= \rho_1- \eps \rho_+ = \rho_2 - \eps \rho_- \geq 0.
    \end{align}
    Then
    \begin{align}
        S(A|B)_{\rho_1} - S(A|B)_{\rho_2}  \leq \eps S(A|B)_{\rho_+} - \eps S(A|B)_{\rho_-} + h(\eps).
        \label{cond-ineq1}
    \end{align}
\end{theorem}
\begin{proof}
Similarly to above, if $\eps = 1$, then $\rho_1 = \rho_+$ and $\rho_2 = \rho_-$ and so the relation holds trivially. For $\eps < 1$, let $\omega \in \cal{D}(\cal{H}_A \otimes \cal{H}_B)$ be defined through the relation
$K = (1 - \eps) \omega$. Then 
\begin{align}
    \rho_1 &= \eps \rho_+ + (1-\eps) \omega \label{eqq1}\\
     \rho_2 &= \eps \rho_- + (1-\eps) \omega \label{eqq2}.
\end{align}
Note that the conditional entropy $S(A|B)_{\rho_1}$ is given by
\begin{align}
   S(A|B)_{\rho_1} 
   &= S(\rho_1) - S(\rho_{1, B}) \nonumber \\
   &= S(\rho_1) + \Tr \left( \rho_1 \log (\1_A \otimes \rho_{1,B})\right),
   \label{eqq3}
\end{align}
where $\rho_{1,B} = \Tr_A \rho_1$. Further, by (\ref{eqq1}), 
\begin{align}
  \Tr \left( \rho_1 \log (\1_A \otimes \rho_{1,B})\right) & = \eps \Tr \left( \rho_+ \log (\1_A \otimes \rho_{1,B})\right) \nonumber \\
  & \quad + (1- \eps) \Tr \left( \omega \log (\1_A \otimes \rho_{1,B})\right)  
\end{align}
and
\begin{align}
    S(\rho_1) &\le \eps S(\rho_+) + (1-\eps) S(\omega) + h(\eps),
\end{align}
where the last inequality follows from the property of ``almost convexity" of the von Neumann entropy. The above inequalities imply that
\begin{align}
    S(A|B)_{\rho_1} & = S(\rho_1) + \Tr \left( \rho_1 \log (\1_A \otimes \rho_{1,B})\right) \nonumber\\
    &\leq h(\eps) - \eps D(\rho_+ \Vert \1_A \otimes \rho_{1,B}) \nonumber \\
    & \quad - (1- \eps) D(\omega \Vert \1_A \otimes \rho_{1,B})\nonumber\\
    &\leq h(\eps) + \eps S(A|B)_{\rho_+} + (1-\eps) S(A|B)_\omega ,
    \label{eqq5}
\end{align}
where in the last step we also used the variational characterization of the conditional entropy, i.e. \eqref{condrel}.
On the other hand, by (\ref{eqq2}) and the concavity of the conditional entropy, we have
\begin{align}
    S(A|B)_{\rho_2} & \geq \eps S(A|B)_{\rho_-} + (1-\eps) S(A|B)_\omega.
    \label{eqq6}
\end{align}
Inequalities (\ref{eqq5}) and (\ref{eqq6}) yield the desired inequality~(\ref{cond-ineq1}). 
\end{proof}

\section{Two key lemmas}
\label{sec:key-lemmas}


The proof of Theorem~\ref{Thm1} (and its sharper version, Theorem~\ref{Thm-sharp} stated in the next section) will be simplified if we extend the definition of the von Neumann entropy to positive operators that
do not necessarily have trace 1. Remember that we defined the functional
\begin{align}
\label{vN1}
S(A) := -\Tr (A \log A)
\end{align}
for every positive operator $A$, $A\ge0$.
Clearly, when $\rho$ is a state, $a\ge0$, and $A=a\rho$, we have
\begin{align}
S(A) = a S(\rho) - a \log a
\label{vN2}
\end{align}
This identity allows to generalise the inequalities (\ref{r1}) and (\ref{r2}) expressing almost convexity and concavity of the von Neumann entropy, respectively,
to this entropy functional. One easily obtains the following, for $A,B\ge0$ with $a=\Tr A$, $b=\Tr B$:
\begin{eqnarray}
S(A+B) &\le& S(A) + S(B) \label{eq:subadd}\\
S(A+B) &\ge& S(A)+S(B) -(a+b) h\left(\dfrac{b}{a+b}\right).
\label{eq:almost-super}\end{eqnarray}
Thus, almost convexity turns into functional subadditivity (not to be confused with the usual subadditivity of the von Neumann entropy with respect
to addition of subsystems), and concavity turns into functional almost-super-additivity.

We now show that the latter inequality can be made sharper when $B$ is not full rank.
That this should be possible is already being hinted at by the existence of the identity $S(A+B)=S(A)+S(B)$ when $A$ and $B$
have orthogonal supports. The following lemma extends this fact.
\begin{lemma}[Sharpened almost-superadditivity]\label{SASA}
Let $A,B\ge 0$ and ${\mathcal M}=\supp B$. Denoting the restriction of an operator $X$ to $\cal M$ by $X|_{\mathcal M}$, and
defining
$a' = \Tr A|_{\mathcal M}$ and $b= \Tr B \equiv  \Tr B|_{\mathcal M}$,
we have
\begin{eqnarray}
S(A+B) - S(A) 
&\ge& S((A+B)|_{\mathcal M}) - S(A|_{\mathcal M}) \nonumber\\
&\ge& S(B) - (a'+b) h\left(\dfrac{b}{a'+b}\right)\label{eq:SASA}
\end{eqnarray}
\end{lemma}
This inequality will be an essential ingredient in the proof of Theorem~\ref{Thm1}.

\medskip

\noindent
\begin{proof}

Monotonicity of the Holevo chi $\chi$ under a CPTP map $\Phi$, applied to a two-element ensemble, explicitly reads as follows:
\begin{align}
    & S(p\rho + (1-p)\sigma) - p S(\rho) - (1-p) S(\sigma) \nonumber \\
    & \ge S(p\Phi(\rho) + (1-p)\Phi(\sigma)) - p S(\Phi(\rho)) - (1-p) S(\Phi(\sigma)),
\end{align}
or, rephrased in terms of positive operators $A$ and $B$,
\begin{align}
    S(A+B) - S(A)-S(B) & \ge S(\Phi(A)+\Phi(B)) \nonumber \\
    & \quad - S(\Phi(A)) - S(\Phi(B)).
\end{align}
The third term of each side drops out if $\Phi$ leaves $B$ unchanged.
Let, in particular, $\Phi$ be a pinching to the subspaces $\mathcal{M}=\supp B$ and $\mathcal{M}^\perp=\ker B$.
Then
\begin{eqnarray*}
S(A+B)-S(A) 
&\ge& S((A+B)|_{\mathcal M} \oplus (A+B)|_{\mathcal{M}^\perp}) \\
&\phantom{=}& - S(A|_{\mathcal M} \oplus A|_{\mathcal{M}^\perp}) \\
&=& S((A+B)|_{\mathcal M}) - S(A|_{\mathcal M}),
\end{eqnarray*}
which is the first inequality of the lemma.

The second inequality of the lemma then follows by exploiting almost super-additivity of $S$ given by~\eqref{eq:almost-super}.
\end{proof}

\begin{lemma}
\label{lem:auxiliary-estimate}
    For $\rho, \sigma \in \cD(\cH)$ with $\rho \perp \sigma$ (i.e.~they have mutually orthogonal supports) and $\omega = t \rho + (1 - t) \nu$ with $t \in (0, 1)$ and $\nu \in \cD(\cH)$, one has
    \begin{equation}
        D(\rho \Vert \omega) - D(\sigma \Vert \omega) \le \log(\frac{1}{t} - 1)  .
    \end{equation}
\end{lemma}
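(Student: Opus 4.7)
The plan is to split the desired bound into two independent inequalities, one for each relative entropy term, that together give exactly $\log(1/t) + \log(1-t) = \log(1/t - 1)$.

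First, I would bound $D(\rho \Vert \omega)$ from above using the max-relative entropy. Since $\omega = t\rho + (1-t)\nu$ and $(1-t)\nu \geq 0$, we have the operator inequality $\omega \geq t\rho$, equivalently $\rho \leq e^{\log(1/t)}\omega$. By the definition in~\eqref{dmax}, this gives $D_{\max}(\rho\Vert\omega) \leq \log(1/t)$, and since the Umegaki relative entropy is dominated by the max-relative entropy, we get $D(\rho\Vert\omega) \leq \log(1/t)$.

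Next, I would derive the matching lower bound $D(\sigma\Vert\omega) \geq -\log(1-t)$ by exploiting the orthogonality $\rho \perp \sigma$. Let $P_\sigma$ denote the projection onto $\supp\sigma$, and consider the two-outcome measurement channel $\mathcal{M}(X) = \Tr(P_\sigma X)\ketbra{0} + \Tr((\mathds{1}-P_\sigma)X)\ketbra{1}$. Since $\rho \perp \sigma$ implies $P_\sigma \rho = 0$, we have $\Tr(P_\sigma \omega) = (1-t)\Tr(P_\sigma \nu) \leq 1-t$, while $\Tr(P_\sigma \sigma) = 1$. Monotonicity of the relative entropy under $\mathcal{M}$ then yields
\begin{equation*}
    D(\sigma \Vert \omega) \geq D\bigl(\ketbra{0} \,\big\Vert\, \Tr(P_\sigma\omega)\ketbra{0} + (1-\Tr(P_\sigma\omega))\ketbra{1}\bigr) = -\log \Tr(P_\sigma\omega) \geq -\log(1-t) \, ,
\end{equation*}
with the edge case $\Tr(P_\sigma\nu) = 0$ rendering $D(\sigma\Vert\omega) = \infty$ so the bound holds trivially.

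Subtracting these two estimates gives $D(\rho\Vert\omega) - D(\sigma\Vert\omega) \leq \log(1/t) + \log(1-t) = \log((1-t)/t) = \log(1/t - 1)$, as required. I do not anticipate any real obstacle here: the only subtlety is making sure the orthogonality $\rho\perp\sigma$ is invoked correctly to cancel the $\rho$-contribution to $\Tr(P_\sigma\omega)$, and handling the degenerate case where $\nu$ has no overlap with $\supp\sigma$ separately.
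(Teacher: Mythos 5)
Your proof is correct, and it follows the same two-step strategy as the paper: split the claim into the upper bound $D(\rho\Vert\omega)\le -\log t$ and the lower bound $D(\sigma\Vert\omega)\ge -\log(1-t)$, then add. The differences are in how each piece is obtained. For the first bound, the paper deduces $D(\rho\Vert\omega)\le D(\rho\Vert t\rho)=-\log t$ directly from $\omega\ge t\rho$, while you route through $D\le D_{\max}$ together with $D_{\max}(\rho\Vert\omega)\le\log(1/t)$; these are equivalent and equally standard. For the second bound, the paper applies the data-processing inequality to the pinching $\Phi(\tau)=P_\rho\tau P_\rho+P_\sigma\tau P_\sigma$ and then invokes orthogonality and non-negativity of the relative entropy, whereas you apply it to the binary measurement channel determined by $\{P_\sigma,\mathds{1}-P_\sigma\}$ and reduce to the classical computation $D(\sigma\Vert\omega)\ge-\log\Tr(P_\sigma\omega)\ge-\log(1-t)$. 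Your channel choice has a small advantage: it is manifestly trace-preserving, whereas the paper's $\Phi$ as written is trace-preserving only when $P_\rho+P_\sigma=\mathds{1}$ (otherwise one must add the block $(\mathds{1}-P_\rho-P_\sigma)\tau(\mathds{1}-P_\rho-P_\sigma)$, a point the paper glosses over). Your treatment of the degenerate case $\Tr(P_\sigma\nu)=0$ is also sound, since then $D(\sigma\Vert\omega)=\infty$ while $D(\rho\Vert\omega)<\infty$ because $\omega\ge t\rho$ forces $\ker\omega\subseteq\ker\rho$.
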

\begin{proof}
    From $\omega = t \rho + (1 - t) \nu$ it follows that $ \omega \ge t \rho$ which in turn gives
    \begin{equation}
        D(\rho \Vert \omega) \le D(\rho \Vert t \rho) = -\log t \, .
    \label{t23}
    \end{equation}
    Let us define the pinching map $\Phi$ which acts on any $\tau \in \cD(\cH)$ as follows: 
    \begin{equation}
       \Phi(\tau) := 
       P_{\sigma} \tau P_{\sigma} + P_{\sigma}^\perp \tau P_{\sigma}^\perp ,
    \end{equation}
 where $P_{\sigma}$ and $P_{\sigma}^\perp$ denote orthogonal projections onto the support of $\sigma$ and its complement, respectively. 
 Then, by the data-processing inequality, we have
    \begin{equation}\label{eq:DPIrelativ-enetropy}
        \begin{aligned}
           D(\sigma \Vert \omega)
            & \geq  D(\Phi(\sigma) \Vert \Phi(\omega)) \\
            & =  D(\sigma \Vert t \rho + (1 - t) \Phi(\nu))  \\
 &  =  D(\sigma \Vert (1 - t) \Phi(\nu)) ,
        \end{aligned}
    \end{equation}
 where the last equality holds because $\rho \perp \sigma$. Therefore,
    \begin{equation}
        D(\sigma \Vert \omega) \ge  -\log(1 - t) + D(\sigma \Vert \Phi(\nu)) \ge -\log(1 - t), 
    \label{t24}
    \end{equation}
  due to the non-negativity of the relative entropy between two quantum states. The bounds (\ref{t23}) and (\ref{t24}) together yield the statement of the lemma, since
     \begin{equation}
       D(\rho \Vert \omega)- D(\sigma \Vert \omega) \le  -\log t  + \log(1 - t)  =  \log(\frac{1}{t} - 1)  .  
    \label{eq:diff_relative_entropies}
     \end{equation}
\end{proof}
\medskip

\noindent We are now in a position to prove Theorem~\ref{Thm1}.

\section{Proof of Theorem~\ref{Thm1}}
\label{sec:proofThm1}


As mentioned earlier, we actually prove an inequality which is sharper than the one stated in Theorem~\ref{Thm1}. It not only involves the quantity $\eps$ (i.e.~the trace distance between the states $\rho_1$ and $\rho_2$)
but also 
$$
c := \Tr \rho_2|_{\mathcal M}
$$
where ${\mathcal M}$ denotes the support of $\rho_-$. We are grateful to Peter Frenkel for inquiring about a possibility of such kind. The sharper inequality is stated in the following theorem.
\bigskip

\begin{theorem}\label{Thm-sharp}
Let $\rho_1, \rho_2 \in \cal{D}(\cal{H})$, with $\dim \cH = d$, such that $\frac{1}{2}\norm{\rho_1 - \rho_2}_1 = \eps$, for some $\eps \in [0,1]$. 
    Let $\rho_{\pm}$ be the normalized Jordan-Hahn decomposition of $(\rho_1-\rho_2)$ as described above i.e. 
\begin{align}
        \rho_1 - \rho_2 &= \eps \rho_+ - \eps\rho_- \, \label{two3},
   \end{align}
        where $\rho_\pm \in \cal{D}(\cal{H}_A \otimes \cal{H}_B)$ and $\rho_+ \perp \rho_-$. Further, let $c := \Tr \rho_2|_{\mathcal M}$, where ${\mathcal M}$ denotes the support of $\rho_-$. Then
    \begin{align}
        S(\rho_1) - S(\rho_2) \leq \eps S(\rho_+) - \eps S(\rho_-) + c \, h\left(\frac{\eps}{c}\right). \, 
        \label{ineq1-sharp}
    \end{align}
\end{theorem}
To see that this inequality is indeed sharper than the one stated in Theorem~\ref{Thm1}, we need to establish that 
\begin{align}
 c \, h\left(\frac{\eps}{c}\right) \le h(\eps),   
\end{align}
This follows from concavity of the binary entropy $h$ and the fact that $0\le \eps\le c\le 1$ (which in turn follows from taking the trace of the restriction to $\mathcal{M}=\supp \rho_-$ of the identity (\ref{two3})):
\begin{align}
h(\eps) & = h\left( c\frac{\eps}{ c}+(1- c) 0\right) \nonumber \\
& \ge  c \, h\left(\frac{\eps}{ c}\right)+(1- c) h(0) =  c \, h\left(\frac{\eps}{ c}\right).\end{align}
\medskip

\noindent
We now proceed to prove Theorem~\ref{Thm-sharp}.
\begin{proof}
Let us define the positive operator $M$ by
$$
M := \rho_1+\Delta_- = \rho_2+\Delta_+.
$$
Then we have
\begin{align}
S(\rho_1) - S(\rho_2) 
& = S(\rho_1) - S(M) + S(M) - S(\rho_2) \\
& = -(S( \rho_1+\Delta_-)-S(\rho_1)) \nonumber \\
& \quad + (S(\rho_2+\Delta_+) - S(\rho_2)).
\end{align}
To find an upper bound on the first term we use the inequality (\ref{eq:SASA}) of Lemma~\ref{SASA} (sharpened almost super-additivity), and note
that $\Tr \Delta_- = \eps$ and $\Tr(\rho_1+\Delta_-)|_{\mathcal{M}} = \Tr\rho_2|_{\mathcal{M}} =c$.
For the second term we use subadditivity~(\ref{eq:subadd}). This gives
\begin{align}
S(\rho_1) - S(\rho_2) 
&\le -\left(S(\Delta_-) - c \, h\left(\frac{\eps}{c}\right) \right) + S(\Delta_+) \nonumber\\
& =  S(\Delta_+)-S(\Delta_-) + c \, h\left(\frac{\eps}{c}\right)\nonumber\\
&= \eps S(\rho_+)-\eps S(\rho_-) + c \, h\left(\frac{\eps}{c}\right).
\end{align}
  
\end{proof}

\section{A refined continuity bound for the von Neumann entropy}
\label{sec:ref-AF}
The fundamental entropic inequality stated in Theorem~\ref{Thm1} along with Lemma~\ref{lem:auxiliary-estimate} leads to the refinement of the AF inequality~(\ref{AF}) given by Theorem~\ref{thm-ref-AF} below. 
\begin{theorem}\label{thm-ref-AF}
 Let $\rho_1, \rho_2 \in \cal{D}(\cal{H})$, with $\dim \cH = d$, such that $\frac{1}{2}\norm{\rho_1 - \rho_2}_1 = \eps$, for some $\eps \in [0,1]$. Let
 $\rho_1 - \rho_2 = \eps \rho_+ - \eps\rho_-$ where $\rho_\pm \in \cal{D}(\cal{H})$, and $\rho_+ \perp \rho_-$. Then
    \begin{multline}
        |S(\rho_1) - S(\rho_2)|  \\ \leq \eps \log\big(d \max\{\lambda_{\max}(\rho_-), \lambda_{\max}(\rho_+)\} - 1\big) + h(\eps) \, .
        \label{ineq1p}
    \end{multline}       
\end{theorem}

\smallskip

\noindent
\begin{remark}
   Note that Berta {\em{et al}}~\cite{Berta.2024} proved an analogous result but with $\lambda_{\max}(\rho_-)$ replaced by $\lambda_{\max}(\rho_2)$, and for $\eps \le 1 - (1/(d \lambda_{\max}(\rho_2))$; see Corollary 3 of~\cite{Berta.2024}. In fact, it was their result that inspired us to prove Theorem~\ref{thm-ref-AF}.
\end{remark}

Before proving the above theorem, assume without loss of generality that $S(\rho_1) \geq S(\rho_2)$. Recall that equality holds in the AF inequality (\ref{AF}) for the following choice:
$$\rho_1 = (1-\eps) \ketbra{1} + \frac{\eps}{d-1} \sum_{i=2}^d \ketbra{i}, \quad {\hbox{and}} \quad \rho_2 = \ketbra{1},$$
since $S(\rho_2)=0$ and $S(\rho_1) = \eps \log (d-1) + h(\eps).$

Note that for this choice of $\rho_1$ and $\rho_2$, we have 
$$\rho_+ = \frac{1}{d-1} \sum_{i=2}^d \ketbra{i} \quad {\hbox{and}} \quad \rho_- = \ketbra{1}.$$
Hence, $\lambda_{\max}(\rho_-)=1$
so that the RHS of (\ref{ineq1p}) 
reduces to the RHS of the AF inequality (\ref{AF}), and the inequality is saturated. 

\subsection{Proof of Theorem~\ref{thm-ref-AF}}
\begin{proof}
By Theorem~\ref{Thm1},
\begin{align}
    S(\rho_1) - S(\rho_2) &\leq \eps (S(\rho_+) - S(\rho_-)) + h(\eps)\nonumber\\
    &= \eps (D(\rho_-\|\tau) - D(\rho_+\|\tau) ) + h(\eps),
    \label{stp0}
\end{align}
where $\tau= I/d,$ the completely mixed state. Let the spectral decomposition of $\rho_-$ be given by
\begin{align}
    \rho_- = \sum_{i=1}^d \lambda_i \ketbra{e_i} &\leq \lambda_{\max}(\rho_-) I= \frac{1}{t} \tau,
    \label{stp1}
\end{align}
where $t := \frac{1}{d\lambda_{\max}(\rho_-)}$. Setting $\omega \equiv \tau$, we obtain from (\ref{stp1})
\begin{align}
    \omega \equiv \tau = t \rho_- + (1-t) \nu,\label{stp2}
\end{align}
where $\nu = \frac{\tau - t \rho_-}{1-t} \in \cD(\cH).$ Since $\rho_+ \perp \rho_-$ and (\ref{stp2}) holds, we can apply Lemma~\ref{lem:auxiliary-estimate} to the RHS of (\ref{stp0}) to obtain
\begin{align}
    {\hbox{RHS of (\ref{stp0})}} &\leq \eps \log (\frac{1}{t} - 1) + h(\eps) \nonumber\\
    &=  \eps \log(d \lambda_{\max}(\rho_-) - 1) + h(\eps)\,.
    \label{stp3}
\end{align}
To obtain (\ref{ineq1p}), note when exchanging $\rho_1$ and $\rho_2$, $\rho_+$ will become $\rho_-$ and vice-versa, and so we can obtain the absolute value bound of (\ref{ineq1p}) by combining (\ref{stp3}) with its version where $\rho_1$ and $\rho_2$ have been exchanged.
\end{proof}

\section{Further Applications}\label{sec:applications}

As a straightforward application of our inequality (\ref{ineq1}), we obtain a uniform  continuity bound for the conditional entropy $S(A|B)_\rho$, in the special case in which the marginals of the two states on the system $B$ are identical. The following has been conjectured (also in the case where the two marginals are not identical)~\cite{Winter.2016, Wilde.2020}: For $\rho_1, \rho_2 \in \cD(\cH_A \otimes \cH_B)$, such that $\frac{1}{2} \Vert \rho_1 - \rho_2 \Vert_1 = \eps$,
\begin{equation}\label{eq:conjecture}
    |S(A|B)_{\rho_1} - S(A|B)_{\rho_2}| \le \eps \log(d_A^2 - 1) + h(\eps) \, .
\end{equation}
This continuity  bound is tight as the inequality is saturated for certain choices of states $\rho_1, \rho_2 \in \cD(\cH_A \otimes \cH_B)$   \cite{Wilde.2020,Alhejji.2023}. The analogous classical inequality has appeared e.g. in \cite{Zhang.2007} and \cite{Alhejji.2020}.
Our proof of~\eqref{eq:conjecture} for the special case of equal marginals is given by Theorem~\ref{thm:continuity-conditional-entropy}.

Using Lemma~\ref{lem:auxiliary-estimate}, we can easily prove the following result on the uniform continuity bound for the conditional entropy of bipartite states with equal marginals, using our fundamental inequality~(\ref{ineq1}) stated in Theorem~\ref{Thm1}.

\begin{theorem}\label{thm:continuity-conditional-entropy}
    For $\rho_1, \rho_2 \in \cD(\cH_A \otimes \cH_B)$ with equal marginals $\rho_{1, B} = \rho_{2, B}$ and $\frac{1}{2}\Vert \rho_1 - \rho_2\Vert_1 = \eps$ for some $\eps \in [0, 1]$, we find that
    \begin{equation}\label{eq:continuity-conditional-entropy}
        |S(A|B)_{\rho_1} - S(A|B)_{\rho_2}| \le \eps \log(d_A^2 - 1) + h(\eps) \, .
    \end{equation}
\end{theorem}
\begin{proof}
    Note that since $\rho_{1, B} = \rho_{2, B}$, we further get that $0 = \Tr_A(\rho_1 - \rho_2) = \Tr_A(\eps(\rho_+ - \rho_-))$. This immediately gives $\rho_{+, B} = \rho_{-, B}\equiv \tilde{\rho}_B$ and hence also $S(\rho_{-,B}) = S(\rho_{+, B})$. Then the proof is a simple application of Theorem~\ref{Thm1} and Lemma~\ref{lem:auxiliary-estimate}. Without loss of generality, we assume that $ S(A|B)_{\rho_1} > S(A|B)_{\rho_2} $. Then,
    \begin{equation}\label{eq:estimate-conditional-entropy}
        \begin{aligned}
            S(A|B)_{\rho_1} - S(A|B)_{\rho_2} 
            & = S(\rho_1) - S(\rho_2) \\
            & \le \eps(S(\rho_+) - S(\rho_-)) + h(\eps) \\
            & = \eps(S(A|B)_{\rho_+} - S(A|B)_{\rho_-}) \\
            & \quad + h(\eps). 
        \end{aligned}
    \end{equation}
    To find an upper bound on the difference $\left(S(A|B)_{\rho_+} - S(A|B)_{\rho_-}\right)$  we first express it as a difference of two relative entropies using~(\ref{condrel})
    \begin{align*}
       S(A|B)_{\rho_+} &- S(A|B)_{\rho_-} \nonumber \\
      = & D(\rho_- \Vert \1_A \otimes \tilde{\rho}_B)  - D(\rho_+ \Vert \1_A \otimes \tilde{\rho}_B)\nonumber\\
      = & D(\rho_- \Vert \tau_A \otimes \tilde{\rho}_B)  - D(\rho_+ \Vert \tau_A \otimes \tilde{\rho}_B)
    \end{align*}
    where $\tau_A = \1_A/d_A$ (the completely mixed state). In the above, the last line follows from the fact that for any positive constant $c$, $D(\rho\|c\sigma) = D(\rho \Vert \sigma) - \log c$.
    
    We can complete the proof of the theorem by applying Lemma~\ref{lem:auxiliary-estimate} with the following choices: { $\rho = \rho_-$, $\sigma=\rho_+$, $\omega = \tau_A \otimes \tilde{\rho}_B$ and $t = \frac{1}{d_A^2}$. We can do this because the conditions of the lemma are satisfied for these choices: first, $\rho_+ \perp \rho_-$, and  second, $\rho_- \le d_A^2\tau_A \otimes \tilde{\rho}_{B}$, (which follows from e.g.~\cite[Lemma 5.11]{Tomamichel.2016}) since $\tilde{\rho}_B = \Tr_A (\rho_-)$, and hence we can write
    \begin{equation*}
        \tau_A \otimes \tilde{\rho}_{B} = \frac{1}{d_A^2}  \rho_- + \Big(1 - \frac{1}{d_A^2}\Big) \nu \, ,
    \end{equation*}
    where $\nu = \frac{d_A^2}{d_A^2 - 1}(\tau_A \otimes \tilde{\rho}_{B} - \frac{1}{d_A^2}\rho_-)$.}

    Thus, applying Lemma~\ref{lem:auxiliary-estimate} with the above choices, we obtain
    \begin{equation}
        S(A|B)_{\rho_1} - S(A|B)_{\rho_2} \leq \varepsilon \log (d_A^2 - 1) + h(\eps) \, . 
    \end{equation}
    We finalize the proof by swapping the roles of $\rho_1$ and $\rho_2$, which gives us the absolute value on the left-hand side of he statement.
\end{proof}

\medskip

The search for the tight bound (\ref{eq:conjecture}) conjectured in \cite{Winter.2016, Wilde.2020}, 
in the setting beyond the one in which the marginals on the conditioning system are identical, remains open. 

\medskip

As mentioned earlier, our fundamental inequality, (\ref{ineq1}) of Theorem~\ref{Thm1}, and Lemma~\ref{lem:auxiliary-estimate} can also be used to obtain a continuity bound on the quantum (Umegaki) relative entropy~\cite{Umegaki.1962} which improves upon the best-known bounds \cite{Bluhm.2023.1,Bluhm.2023.3,Audenaert.2011}.  We begin with the particular case in which the second states in both relative entropies are identical.

\begin{theorem}\label{ThmContBoundRelEnt1Input}
     Let $\rho_1, \rho_2 \in \cD(\cH)$, $\sigma \in \cD_+(\cH)$ with $\frac{1}{2}\Vert \rho_1 - \rho_2\Vert_1 = \eps$  where $\eps \in [0, 1]$. Then
    \begin{align}\label{REcont}
       & |  D(\rho_1 \Vert \sigma) - D(\rho_2 \Vert \sigma) | \nonumber \\
       & \le \eps \log( e^{\max\{D_{\max}(\rho_+ \Vert \sigma), D_{\max}(\rho_-\Vert \sigma)\}} - 1) + h(\eps) \, ,
    \end{align}
    where
    $ \rho_1 - \rho_2 = \eps \rho_+ - \eps\rho_- \,,$
   and $\rho_\pm$ is defined via the Jordan-Hahn decomposition of $(\rho_1-\rho_2)$ as in (\ref{two}). 
    
    In particular, for $\rho \in \cD(\cH)$ and $\sigma \in \cD_+(\cH)$ with $\frac{1}{2}\Vert \rho - \sigma\Vert_1= \eps$ where $\eps \in [0, 1]$, we obtain
    \begin{equation}\label{REcontDB}
        D(\rho \Vert \sigma)  \le \eps \log( e^{D_{\text{max}}(\omega_+ \Vert \sigma) } - 1) + h(\eps) \, , 
    \end{equation}
    where $\omega_+$ is defined via the Jordan-Hahn decomposition of $(\rho-\sigma)$ similarly as  $\rho_+$  for $(\rho_1-\rho_2)$.
\end{theorem}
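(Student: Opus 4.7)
The plan is to decompose the difference of relative entropies into an entropic piece plus a linear piece, handle the entropic piece with \cref{Thm1}, and handle the remaining relative-entropy gap with \cref{lem:auxiliary-estimate}. Using the identity $D(\rho \Vert \sigma) = -S(\rho) - \Tr(\rho \log \sigma)$ together with $\rho_1 - \rho_2 = \eps(\rho_+ - \rho_-)$, I first rewrite
\begin{equation*}
    D(\rho_1 \Vert \sigma) - D(\rho_2 \Vert \sigma) = \bigl(S(\rho_2) - S(\rho_1)\bigr) - \eps \Tr\bigl((\rho_+ - \rho_-) \log \sigma\bigr).
\end{equation*}
Applying \cref{Thm1} in the form $S(\rho_2) - S(\rho_1) \leq \eps S(\rho_-) - \eps S(\rho_+) + h(\eps)$ (obtained by interchanging $\rho_1$ and $\rho_2$, which swaps $\rho_+ \leftrightarrow \rho_-$), and regrouping via $-S(\rho_\pm) - \Tr(\rho_\pm \log \sigma) = D(\rho_\pm \Vert \sigma)$, yields the clean intermediate bound
\begin{equation*}
    D(\rho_1 \Vert \sigma) - D(\rho_2 \Vert \sigma) \leq \eps\bigl(D(\rho_+ \Vert \sigma) - D(\rho_- \Vert \sigma)\bigr) + h(\eps).
\end{equation*}

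Next I would invoke \cref{lem:auxiliary-estimate} to convert this difference of relative entropies into the $D_{\max}$ quantity appearing in the statement. The orthogonality $\rho_+ \perp \rho_-$ supplied by the Jordan-Hahn decomposition is precisely the hypothesis of the lemma, so it only remains to supply an admissible convex decomposition of $\sigma$ with $\rho_+$ as one component. Setting $t := e^{-D_{\max}(\rho_+ \Vert \sigma)}$, the definition~\eqref{dmax} of the max-relative entropy gives $\sigma \geq t \rho_+$, so one can write $\sigma = t \rho_+ + (1 - t)\nu$ with $\nu := (\sigma - t\rho_+)/(1 - t) \in \cD(\cH)$. The lemma then delivers
\begin{equation*}
    D(\rho_+ \Vert \sigma) - D(\rho_- \Vert \sigma) \leq \log(1/t - 1) = \log\bigl(e^{D_{\max}(\rho_+ \Vert \sigma)} - 1\bigr),
\end{equation*}
which establishes one direction of~\eqref{REcont}. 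The reverse direction follows by repeating the argument with $\rho_1 \leftrightarrow \rho_2$ (equivalently $\rho_+ \leftrightarrow \rho_-$), and monotonicity of $x \mapsto \log(e^x - 1)$ combines the two cases into the maximum appearing in~\eqref{REcont}.

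The second claim~\eqref{REcontDB} is then an immediate specialisation: taking $\rho_2 = \sigma$ forces $D(\rho_2 \Vert \sigma) = 0$ and identifies $\rho_+$ with $\omega_+$, while the non-negativity of the relative entropy, $D(\rho \Vert \sigma) \geq 0$, means only the $\rho_+$-direction of the bound is ever required, so no maximum survives. The main conceptual step in the whole argument is recognising that $D_{\max}(\rho_+ \Vert \sigma)$ is exactly the quantity that encodes the sharpest admissible value of $t$ in \cref{lem:auxiliary-estimate}; everything else is bookkeeping built from the single identity $D(\rho \Vert \sigma) = -S(\rho) - \Tr(\rho \log \sigma)$.
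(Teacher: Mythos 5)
Your proposal is correct and follows essentially the same route as the paper's own proof: the same decomposition $D(\rho_i \Vert \sigma) = -S(\rho_i) - \Tr(\rho_i \log \sigma)$, the same application of \cref{Thm1} to reach the intermediate bound $\eps\bigl(D(\rho_+ \Vert \sigma) - D(\rho_- \Vert \sigma)\bigr) + h(\eps)$, and the same use of \cref{lem:auxiliary-estimate} with $t = e^{-D_{\max}(\rho_+ \Vert \sigma)}$, followed by the swap argument for the absolute value. Your handling of \eqref{REcontDB} (using non-negativity of the relative entropy so that only the one-sided bound with $\omega_+$ is needed) also matches the paper's reasoning, stated a bit more explicitly.
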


\begin{proof}
    First, note that 
    \begin{equation}\label{eq:bound_diff_relents}
        \begin{aligned}
            & D(\rho_1 \Vert \sigma) - D(\rho_2 \Vert \sigma) \\
            = & - S(\rho_1) + S(\rho_2) - \Tr( ( \rho_1 -\rho_2) \log \sigma) \\
            = & - S(\rho_1) + S(\rho_2) - \varepsilon \Tr((\rho_+ - \rho_-) \log \sigma ) \\
            \leq & \varepsilon( S(\rho_-) - S(\rho_+)) - \varepsilon \Tr((\rho_+ - \rho_-) \log \sigma ) + h(\varepsilon) ,
        \end{aligned}
    \end{equation}
    where in the last line we have used~\eqref{Inineq2}. Thus, rewriting the RHS of~\eqref{eq:bound_diff_relents} as relative entropies, we obtain
    \begin{equation}\label{eq:intermediate_contboundrelent1input}
        D(\rho_1 \Vert \sigma) - D(\rho_2 \Vert \sigma) \leq \varepsilon ( D(\rho_+ \Vert \sigma) -  D(\rho_- \Vert \sigma)) + h(\varepsilon) \, .
    \end{equation}
    Now, note that, from the definition~(\ref{dmax}) of $D_{\text{max}}$, we have
    \begin{equation}
        \rho_+ \leq e^{D_{\text{max}}(\rho_+ \Vert \sigma) } \sigma \, , 
    \end{equation}
    and thus
    \begin{equation}
        \sigma = e^{-D_{\text{max}}(\rho_+ \Vert \sigma) } \rho_+ + \left(1 - e^{-D_{\text{max}}(\rho_+ \Vert \sigma) }\right) \nu 
    \end{equation}
    for a certain $\nu \in \mathcal{D}(\mathcal{H})$. Then, as an immediate consequence of Lemma~\ref{lem:auxiliary-estimate}, we have
    \begin{equation}\label{eq:intermediate_contboundrelent1input2}
        D(\rho_+ \Vert \sigma) -  D(\rho_- \Vert \sigma) \leq \log \left( e^{D_{\text{max}}(\rho_+ \Vert \sigma) } - 1  \right) \, ,
    \end{equation}
    which jointly with~\eqref{eq:intermediate_contboundrelent1input}, and the analogous inequality obtained by swapping the roles of $\rho_1$ and $\rho_2$, allows us to conclude~\eqref{REcont}.  
    
    For~\eqref{REcontDB}, we consider $\rho=\rho_1$ and $\sigma=\rho_2$ in~\eqref{REcont}. Noticing the trivial fact that $D(\rho \Vert \sigma) \ge D(\sigma \Vert \sigma)$ immediately yields the desired inequality.
\end{proof}

The uniform continuity bound for the relative entropy in the first argument provided in~\eqref{REcont} can be compared with the findings of~\cite[Eq. (43) and (44)]{Gour.2020} (also based on the previous work~\cite{Gour.2021}), where it was shown that 
\begin{equation} \label{REcont1inputTG}
    |D(\rho_1 \Vert \sigma) - D(\rho_2 \Vert \sigma)| \le \max\limits_{i = 1, 2} \, \log \left( 1 + \frac{\norm{\rho_1 - \rho_2}_\infty}{\lambda_{\min}(\rho_i) \lambda_{\min}(\sigma)} \right) \, ,
\end{equation}
whenever $\rho_1, \rho_2 \in \cD_+(\cH)$ and $\min\limits_{i = 1, 2} \lambda_{\min}(\rho_i) > \norm{\rho_1 - \rho_2}_\infty$. Additionally,~\eqref{REcont} can be compared with 
\begin{equation}\label{REcont1inputBCGP}
    |D(\rho_1 \Vert \sigma) - D(\rho_2 \Vert \sigma)| \le \eps \log \lambda_{\min}(\sigma)^{-1} + (1 + \eps) h \Big(\frac{\eps}{1 + \eps}\Big) \, , 
\end{equation}
from~\cite[Corollary 5.9]{Bluhm.2023.1}. The comparison between~\eqref{REcont},~\eqref{REcont1inputTG} and~\eqref{REcont1inputBCGP} is made explicit in Figure~\ref{fig:plot}, where it is clear that our new bound~\eqref{REcont} outperforms the others.

\begin{figure}[ht!]
    \centering
    \includegraphics[scale=0.4]{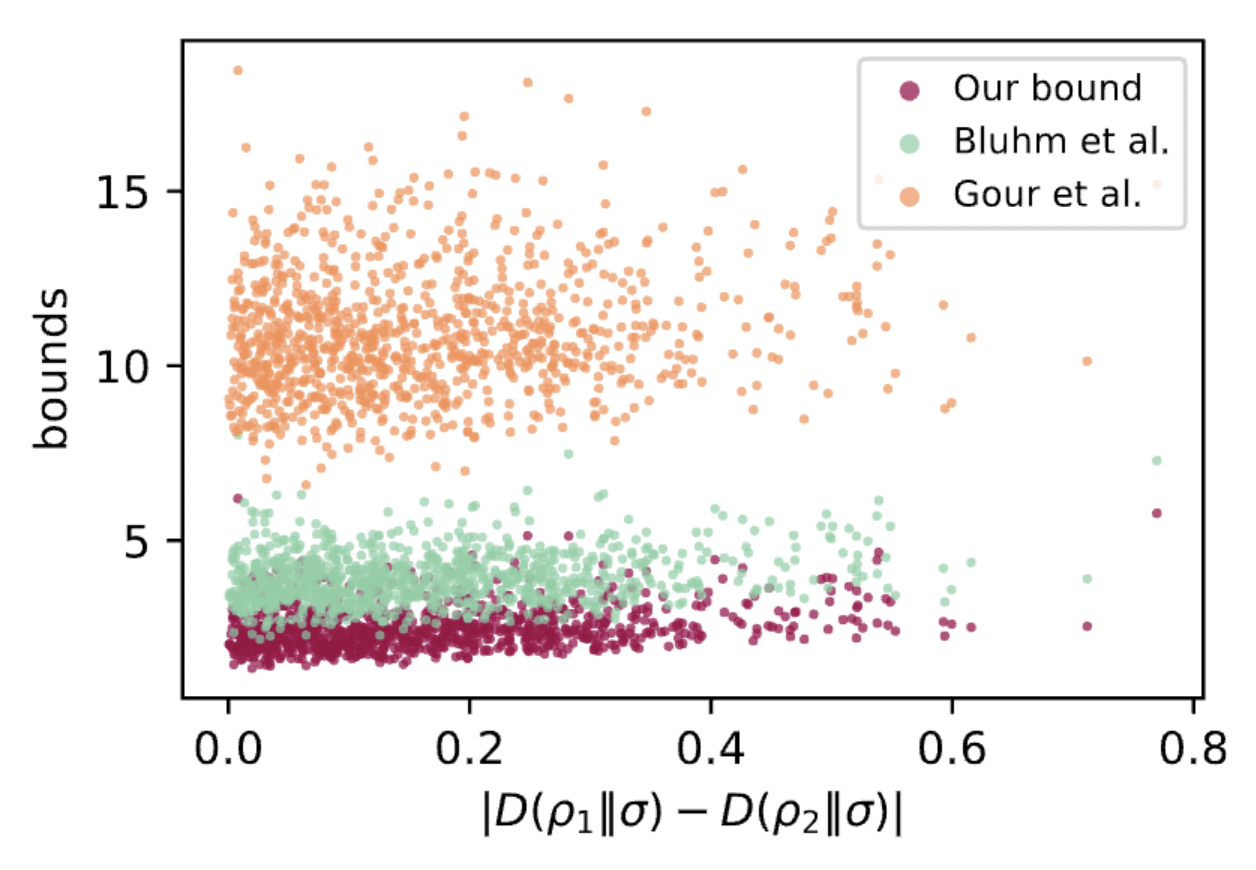}
    \caption{Comparison between the bounds provided in~\eqref{REcont} [our bound],~\eqref{REcont1inputTG} [Gour et al.] and~\eqref{REcont1inputBCGP} [Bluhm et al.]. Here, we have taken $d=15$, and we have randomly generated $1000$ triples of density matrices $\rho_1$, $\rho_2$ and $\sigma$, for which we have plotted the corresponding values of the three bounds considered. This shows that~\eqref{REcont}  is always slightly better than~\eqref{REcont1inputBCGP}, and better than~\eqref{REcont1inputTG}.}
    \label{fig:plot}
\end{figure}

Next, we can easily extend Theorem \ref{ThmContBoundRelEnt1Input} to a continuity bound in both arguments of the relative entropy in the following way.

\begin{corollary}\label{cor:RelEntbothinputs}
       Let $\rho_1, \rho_2 \in \cD(\cH)$ $\sigma_1, \sigma_2 \in \cD_+(\cH)$ with $\frac{1}{2}\Vert \rho_1 - \rho_2 \Vert_1= \eps$ and $\frac{1}{2} \Vert \sigma_1 -\sigma_2\Vert_1 \leq \delta$ where $\eps, \delta \in (0, 1)$. Then
       \begin{equation}\label{REcontbothinputs}
       \begin{aligned}
       & |  D(\rho_1 \Vert \sigma_1) - D(\rho_2 \Vert \sigma_2) | \\
       & \le \eps \log( e^{\max\{D_{\max}(\rho_+ \Vert \sigma_1), D_{\max}(\rho_-\Vert \sigma_2)\}} - 1) \\
       & \quad + \log(1 + \delta \lambda_{\min}(\sigma)^{-1}) + h(\eps) \, ,
       \end{aligned}
       \end{equation}
       where $ \rho_1 - \rho_2 = \eps \rho_+ - \eps\rho_- \,,$
       and $\rho_\pm$ is defined via the Jordan-Hahn decomposition of $(\rho_1-\rho_2)$ as in (\ref{two}), and $\lambda_{\min}(\sigma) := \min\{\lambda_{\min}(\sigma_1), \lambda_{\min}(\sigma_2)\}$.
\end{corollary}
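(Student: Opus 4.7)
The plan is to insert mixed intermediate relative entropies so that each signed difference splits into a term handled by Theorem~\ref{ThmContBoundRelEnt1Input} (where only the first argument changes) and a term in which only the second argument changes. Concretely, for $D(\rho_1 \Vert \sigma_1) - D(\rho_2 \Vert \sigma_2)$ I would add and subtract $D(\rho_2 \Vert \sigma_1)$; for the reverse direction $D(\rho_2 \Vert \sigma_2) - D(\rho_1 \Vert \sigma_1)$ I would add and subtract $D(\rho_1 \Vert \sigma_2)$. Taking the maximum at the end reproduces the asymmetric $\max\{D_{\max}(\rho_+ \Vert \sigma_1),\,D_{\max}(\rho_- \Vert \sigma_2)\}$ featured in the statement, the asymmetry coming from the fact that swapping $\rho_1$ and $\rho_2$ interchanges the roles of $\rho_+$ and $\rho_-$ in the Jordan-Hahn decomposition.

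For the $\rho$-varying pieces I would simply reuse the two-step argument from the proof of Theorem~\ref{ThmContBoundRelEnt1Input}: combining~\eqref{eq:intermediate_contboundrelent1input} with~\eqref{eq:intermediate_contboundrelent1input2} yields
\begin{equation*}
D(\rho_1 \Vert \sigma_1) - D(\rho_2 \Vert \sigma_1) \le \eps\log\bigl(e^{D_{\max}(\rho_+ \Vert \sigma_1)} - 1\bigr) + h(\eps),
\end{equation*}
and the same chain with $\sigma_1$ replaced by $\sigma_2$ and $\rho_+$ by $\rho_-$ bounds $D(\rho_2 \Vert \sigma_2) - D(\rho_1 \Vert \sigma_2)$.

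For the $\sigma$-varying pieces I would write $D(\rho \Vert \sigma_i) - D(\rho \Vert \sigma_j) = \Tr\bigl(\rho(\log \sigma_j - \log \sigma_i)\bigr)$ and observe that, since $\sigma_1 - \sigma_2$ is traceless and Hermitian, its Jordan-Hahn decomposition yields $\|\sigma_1 - \sigma_2\|_\infty \le \tfrac{1}{2}\|\sigma_1 - \sigma_2\|_1 \le \delta$ (the positive and negative parts have equal trace, each equal to half the one-norm, and each dominates its own infinity-norm). Hence
\begin{equation*}
\sigma_j \le \sigma_i + \delta\,\mathds{1} \le \Bigl(1 + \tfrac{\delta}{\lambda_{\min}(\sigma_i)}\Bigr)\sigma_i,
\end{equation*}
and operator monotonicity of the logarithm gives $\log\sigma_j - \log\sigma_i \le \log\bigl(1 + \delta/\lambda_{\min}(\sigma_i)\bigr)\mathds{1}$; tracing against $\rho$ and using $\lambda_{\min}(\sigma_i) \ge \lambda_{\min}(\sigma)$ produces the bound $\log(1 + \delta/\lambda_{\min}(\sigma))$ in either direction.

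Adding the two pieces for each split and taking the maximum yields exactly~\eqref{REcontbothinputs}. The only delicate point — and the step I would double-check carefully — is the tracelessness refinement $\|\sigma_1-\sigma_2\|_\infty \le \tfrac{1}{2}\|\sigma_1-\sigma_2\|_1$, since without this factor of two the stated $\delta$ would degrade to $2\delta$; every other step is a direct concatenation of Theorem~\ref{ThmContBoundRelEnt1Input} with a one-line operator-monotone-log estimate, so I do not anticipate any further obstacle.
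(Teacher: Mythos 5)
Your proposal is correct and follows essentially the same route as the paper: the same splitting via a mixed intermediate term $D(\rho_2 \Vert \sigma_1)$ (resp.\ $D(\rho_1 \Vert \sigma_2)$ for the reverse direction), the same reuse of \eqref{eq:intermediate_contboundrelent1input} and \eqref{eq:intermediate_contboundrelent1input2}, and the same Jordan--Hahn argument giving $\norm{\sigma_1 - \sigma_2}_\infty \le \tfrac{1}{2}\norm{\sigma_1 - \sigma_2}_1 \le \delta$, which the paper also invokes. The only cosmetic difference is how you pass from this to the $\sigma$-varying bound (you add $\delta\,\mathds{1}$ and use operator monotonicity of $\log$, while the paper goes through $D_{\max}(\sigma_2\Vert\sigma_1) = \log \Vert \sigma_1^{-1/2}\sigma_2\sigma_1^{-1/2}\Vert_\infty$); both yield the identical term $\log(1 + \delta\,\lambda_{\min}(\sigma)^{-1})$.
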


\begin{proof}
Let us first begin by rewriting the difference of relative entropies in terms of another difference of relative entropies for which the second states are identical:
       \begin{equation}\label{diff}
        \begin{aligned}
            D(\rho_1 \Vert \sigma_1) - D(\rho_2 \Vert \sigma_2) 
            & = D(\rho_1 \Vert \sigma_1) - D(\rho_2 \Vert \sigma_1) \\
            & \quad + \Tr ( \rho_2 (\log \sigma_2 - \log \sigma_1)) .
        \end{aligned}
    \end{equation}
    For the first two terms above, we use \eqref{eq:intermediate_contboundrelent1input} and \eqref{eq:intermediate_contboundrelent1input2} from the proof of \ref{ThmContBoundRelEnt1Input}, obtaining the first and last terms of the RHS of \eqref{REcontbothinputs}. For the last term in \eqref{diff}, note that $\sigma_2 \leq e^{D_{\max}(\sigma_2 \Vert \sigma_1)} \sigma_1$. Hence, $\log \sigma_2 \leq D_{\max} (\sigma_2 \Vert \sigma_1)\II + \log \sigma_1$. This in turn implies that 
    \begin{equation}
        \log \sigma_2 - \log \sigma_1 \leq D_{\max}(\sigma_2 \Vert \sigma_1) \II
        = \II \log \Vert \sigma_1^{-1/2} \sigma_2 \sigma_1^{-1/2} \Vert_\infty.
    \end{equation}
    From this we get
    \begin{equation}\label{tt3}
        \begin{aligned}
              \Tr \left( \rho_2 (\log \sigma_2 - \log \sigma_1) \right)
            & \leq (\Tr \rho_2) D_{\max}(\sigma_2 \Vert \sigma_1) \\
            & = \log \Vert \sigma_1^{-1/2} \sigma_2 \sigma_1^{-1/2} \Vert_\infty \, .
        \end{aligned}
    \end{equation}
    Since we can write $\sigma_2$ as
    \begin{equation}
        \sigma_2 = \sigma_1^{1/2} \left(  \sigma_1^{-1/2} (\sigma_2 - \sigma_1)  \sigma_1^{-1/2} + \II \right) \sigma_1^{1/2}  \, ,
    \end{equation}
    we have from (\ref{tt3})
    \begin{equation}\label{fint3}
        \begin{aligned}
            & \Tr \left( \rho_2 (\log \sigma_2 - \log \sigma_1) \right) \\
            & \quad \quad \leq \log \Vert \sigma_1^{-1/2} (\sigma_2 - \sigma_1)  \sigma_1^{-1/2} + \II  \Vert_\infty \\ 
            & \quad \quad \leq \log \left(1 + \Vert \sigma_1^{-1}\Vert_\infty \Vert \sigma_2 - \sigma_1 \Vert_\infty \right)\\
            & \quad \quad \leq  \log \left( 1 + \delta \lambda_{\min}^{-1}(\sigma) \right) \, ,
        \end{aligned}
    \end{equation}
    where we have used the fact that $\Vert \sigma_2 - \sigma_1 \Vert_\infty \leq (1/2) \Vert \sigma_2 - \sigma_1 \Vert_1 = \delta$ which can be deduced from the Jordan-Hahn decomposition.
\end{proof}

To the best of our knowledge, the only previously existing bound in the form of~\eqref{REcontbothinputs} is that of~\cite[Theorem 5.13]{Bluhm.2023.1}, where it was shown that
\begin{align}\label{REcontBCGP} 
    \begin{aligned}
        & |D(\rho_1 \Vert \sigma_1) - D(\rho_2 \Vert \sigma_2)| \\
        &\leq  \left( \eps +  \frac{3 \delta}{1- \frac{\lambda_{\min} (\sigma)}{2}}   \right)  \log \left( {2}{\lambda_{\min} (\sigma)}^{-1} \right) \\
        & \quad + (1+\eps) h \left( \frac{\eps}{1+\eps} \right) + 2 \log\left(1 + \frac{ {2}{\lambda_{\min} (\sigma)}^{-1}  \delta }{ 1- \frac{\lambda_{\min} (\sigma)}{2} + \delta} \right)\, . 
    \end{aligned}
\end{align}
The RHS of inequality \eqref{REcont} gives a more manageable estimate that outperforms~\eqref{REcontBCGP}.

\begin{remark}\label{RemNonMonotonicGlobal}
   Note that the right hand sides of \eqref{eq:continuity-conditional-entropy} of  Theorem \ref{thm:continuity-conditional-entropy}, (\ref{REcont}) of Theorem~\ref{ThmContBoundRelEnt1Input}, and (\ref{REcontbothinputs}) of Corollary~\ref{cor:RelEntbothinputs}, respectively, are generally not monotonically increasing in $\eps$, and hence it is important to note that we assumed in all cases that the trace distance  $\frac{1}{2} \norm{\rho_1 - \rho_2}_1$ is exactly equal to $\eps$ and not just upper bounded by it. If one wants to obtain a result where one only assumes an upper bound $\frac{1}{2} \norm{\rho_1 - \rho_2}_1 \leq \eps'$, one will have to take a supremum of the RHS of \eqref{REcont} over all $\eps \leq \eps'.$ It is easy to see that the RHS of \eqref{REcont} is monotonically increasing until it hits its maximum at $\eps = 1 - 1/d_A^2$, and hence one then obtains
    \begin{equation}
    \begin{aligned}
        &|S(A|B)_{\rho_1} - S(A|B)_{\rho_2}| \\
        & \le \begin{cases} \eps' \log( d_A^2 - 1) + h(\eps'), & \eps' < 1 - 1/d_A^2 \\
       \log(d_A^2), & \eps' \geq 1-1/d_A^2
       \end{cases}\, ,
    \end{aligned}
    \end{equation}
    which is similar to the result obtained in \cite[Theorem 5]{Berta.2024}. Analogously, for Theorem~\ref{ThmContBoundRelEnt1Input} one gets
    \begin{equation}\label{eq:continuity_bound_relent_firstinput_monotonic}
    \begin{aligned}
        &|  D(\rho_1 \Vert \sigma) - D(\rho_2 \Vert \sigma) | \\
        & \le \begin{cases} \eps' \log( M - 1) + h(\eps'), & \eps' < 1 - 1/M \\
       \log(M), & \eps' \geq 1-1/M
       \end{cases}\, ,
    \end{aligned}
    \end{equation}
     with $M \coloneqq e^{\max\{D_{\max}(\rho_+ \Vert \sigma), D_{\max}(\rho_-\Vert \sigma)\}}$, which is similar to the result obtained in \cite[Theorem 1]{Berta.2024}. Finally, for Corollary~\ref{cor:RelEntbothinputs}, note that this problem is not present in the factor involving $\sigma_1$ and $\sigma_2$, for which we only need to assume $\frac{1}{2} \norm{\sigma_1 - \sigma_2}_1 \leq \delta$. Therefore, as a consequence of this and \eqref{eq:continuity_bound_relent_firstinput_monotonic}, one gets
    \begin{equation}
    \begin{aligned}
        & |  D(\rho_1 \Vert \sigma_1) - D(\rho_2 \Vert \sigma_2) | \\
        & \le \left\lbrace\begin{aligned} & \eps' \log( M - 1) + \log(1 + \delta \lambda_{\min}(\sigma)^{-1})+ h(\eps'), \\
        & \hspace{5.5cm} \eps' < 1 - 1/M \\
       & \log(M) + \log(1 + \delta \lambda_{\min}(\sigma)^{-1}), \\
       & \hspace{5.5cm} \eps' \geq 1-1/M
       \end{aligned}\right. \, .
    \end{aligned}
    \end{equation}
\end{remark}

\vspace{0.2cm}

\section{Extension of the fundamental inequality (Theorem~\ref{Thm1}) to the infinite-dimensional setting}
\label{sec:infinite}

The fundamental inequality (Theorem~\ref{Thm1}) is also valid in the infinite-dimensional setting in following form\footnote{We are very grateful to Maksim Shirokov for pointing this out to us.}: 
\begin{equation*}
    S(\rho_1)+\eps S(\rho_-)\leq S(\rho_2)+\eps S(\rho_+)+h(\eps) \, ,
\end{equation*} 
where one or both sides may equal $+\infty$. This can be derived from the finite-dimensional version by approximation.   

We will use the  homogeneous extension of the von Neumann entropy $S(\rho)=\Tr (\eta(\rho))$ (where $\eta(x)=-x \log x$) to the positive cone of trace class operators $\cT_+(\cH)$ on a separable Hilbert space $\cH$, defined as
\begin{equation}\label{S-ext}
    S(\rho) := (\Tr\rho)S(\rho/\Tr\rho)  = \Tr(\eta(\rho))-\eta(\Tr\rho)
\end{equation}
for any nonzero operator $\rho$ in $\cT_+(\cH)$ and equal to $0$ at the zero operator \cite{Lindblad.1974}.

It is easy to see (cf., f.i., \cite[p.1541]{Shirokov.2011}) that
\begin{equation}\label{0}
    S(c\rho)=c S(\rho),\quad c\geq 0,
\end{equation}
and
\begin{equation}\label{1}
    S(\rho+\sigma)\leq S(\rho)+S(\sigma)+h(\Tr\rho,\Tr\sigma),
\end{equation}
for any $\rho$ and $\sigma$ in $\cT_+(\cH)$, where  $h(\Tr\rho,\Tr\sigma)=\eta(\Tr\rho)+\eta(\Tr\sigma)-\eta(\Tr(\rho+\sigma))$
is the homogeneous extension of the binary entropy to the positive cone in $\mathbb{R}^2$.
\smallskip

\noindent
\textbf{Note:} An equality holds in (\ref{1}) if and only if $\rho\sigma=0$.
\smallskip

Assume that $\rho_1, \rho_2$ are arbitrary states in $\cD(\cH)$ (where in infinite dimensions \(\cD(\cH)\) denotes the positive trace class operators with unit trace) and fix $\rho_\pm$ to be the unique states defined through the Jordan-Hahn decomposition $\rho_1 - \rho_2 = \Delta_+ - \Delta_- = \varepsilon (\rho_+ - \rho_-)$, where $\eps = \frac{1}{2}\norm{\rho_1-\rho_2}_1\neq0$. We want to show that 
\begin{equation}\label{m-in}
    S(\rho_1) + \eps S(\rho_-)\leq S(\rho_2)+\eps S(\rho_+) + h(\varepsilon)
\end{equation}
(where one or both sides may be equal to $+\infty$) using our Theorem~\ref{Thm1} valid for finite rank states.
\smallskip

Let $\{P_n\}$ and $\{Q_n\}$ be non-decreasing sequences of spectral projectors 
of the operators $\Delta_+$ and $\Delta_-$ strongly converging, respectively, 
to the projectors $P_*$ and $Q_*$ onto the supports of $\Delta_+$ and $\Delta_-$. Let further  $R=I_{\cH}-P_*-Q_*$.

Let $\cH'$ be any finite-dimensional space and $\omega$ be a given pure state in $\cS(\cH')$.
Consider the sequences of states  
\begin{equation*}
    \hat{\rho}_{1,n} = (P_n+Q_n+R)\rho_1 (P_n+Q_n+R)\oplus p_n\omega,
\end{equation*}
and
\begin{equation*}
    \hat{\rho}_{2, n} =(P_n+Q_n+R)\rho_2 (P_n+Q_n+R)\oplus q_n\omega,
\end{equation*}
in $\cS(\cH\oplus \cH')$, where $\,p_n=1-\Tr ((P_n+Q_n+R)\rho_1)\,$ and $\,q_n=1-\Tr ((P_n+Q_n+R)\rho_2)$. Employing Jordan-Hahn again we decompose $\hat{\rho}_{1, n} - \hat{\rho}_{2, n}$ into $\hat{\Delta}_{\pm, n}$ and note that
\begin{equation*}
    \hat{\Delta}_{+, n} = P_n\Delta_+ \oplus[p_n-q_n]_+\omega,
\end{equation*}
and
\begin{equation*}
    \quad \hat{\Delta}_{-, n} = Q_n\Delta_-\oplus[p_n-q_n]_-\omega.
\end{equation*}
($[x]_+=\max\{x,0\}$, $[x]_-=\max\{-x,0\}$). Indeed, it is clear that the positive operators in the RHS of the above expressions have orthogonal supports. Note also that 
the difference between these operators is equal to  
\begin{align*}
    (P_n + &Q_n + R)(\Delta_+ - \Delta_-)(P_n+Q_n+R) \\
    & \qquad \oplus([p_n-q_n]_+-[p_n-q_n]_-)\omega\\
    &=(P_n+Q_n+R)(\rho_1 - \rho_2)(P_n+Q_n+R) \\
    & \qquad \oplus(p_n-q_n)\omega \\
    & = \hat{\rho}_{1, n}-\hat{\rho}_{2, n} \, .
\end{align*}
Let $\hat{\rho}_{\pm, n} = \eps_n^{-1}\hat{\Delta}_{\pm, n}$ be states in $\cS(\cH\oplus \cH')$, where  
\begin{equation*}
\begin{aligned}
    \eps_n & = \frac{1}{2}\norm{\hat{\rho}_{1, n} - \hat{\rho}_{2, n}}_1 \\
    & = \frac{1}{2}\left(\norm{(P_n+Q_n+R)(\rho_1 - \rho_2)(P_n+Q_n+R)}_1 \right. \\
    &\qquad \quad  \left. +|p_n-q_n|\right).
\end{aligned}
\end{equation*}
It follows from (\ref{0}), (\ref{1}) and the remark after (\ref{1}) that  
\begin{align*}
    S(\hat{\rho}_{1, n}) & = S((P_n+Q_n+R)\rho_1 (P_n+Q_n+R)) \\
    & \qquad + h(p_n,(1-p_n)) \, ,\\
    S(\hat{\rho}_{2, n}) & = S((P_n+Q_n+R)\rho_2 (P_n+Q_n+R)) \\
    & \qquad + h(q_n,(1-q_n)) \, ,
\end{align*}
and that 
\begin{align*}
    S(\hat{\rho}_{+, n}) 
    &= \eps^{-1}_n(S(P_n\Delta_+)+h(c_n,[p_n-q_n]_+)) \, , \\
    S(\hat{\rho}_{-, n})
    &= \eps^{-1}_n (S(Q_n\Delta_-)+h(d_n,[p_n-q_n]_-)) \, ,
\end{align*}
where
\begin{align*}
    & c_n = \Tr (P_n\Delta_+) \, ,\\
    & d_n = \Tr (Q_n\Delta_- )\, .
\end{align*}
Since $\hat{\rho}_{1, n}$ and $\hat{\rho}_{2, n}$ are finite-rank states for each $n$, Theorem \ref{Thm1} implies that
\begin{equation}\label{m-in-n}
    S(\hat{\rho}_{1, n}) + \eps_n S(\hat{\rho}_{-, n})\leq S(\hat{\rho}_{2, n}) + \eps_n S(\hat{\rho}_{+, n}) + h(\eps_n) \quad \forall n. 
\end{equation}
By Lemma 4 in \cite{Lindblad.1974} we have
\begin{equation*}
    \lim_{n\to+\infty} S((P_n+Q_n+R)\rho_1(P_n+Q_n+R)) = S(\rho_1)\leq+\infty
\end{equation*}
and
\begin{equation*}
    \lim_{n\to+\infty} S((P_n+Q_n+R)\rho_2(P_n+Q_n+R)) = S(\rho_2) \leq +\infty \, .
\end{equation*}
It is also clear that 
\begin{equation*}
    \lim_{n\to+\infty} S(P_n\Delta_+) = S(P_*\Delta_+) = S(\Delta_+) \leq +\infty
\end{equation*}
and
\begin{equation*}
    \lim_{n\to+\infty} S(Q_n\Delta_-)=S(Q_*\Delta_-)=S(\Delta_-) \leq +\infty \, .
\end{equation*}
Since $\eps_n$ tends to $\eps$, $p_n,q_n$ tend to $0$ and $c_n,d_n$ tend to $1$, using these limits relations and the expressions before (\ref{m-in-n}) one can prove (\ref{m-in}) by taking the limit in (\ref{m-in-n}) as $n\to+\infty$.

\section{Strengthened Inequalities}
Note that for all our applications in Sections \ref{sec:ref-AF} and \ref{sec:applications} we relied on the inequality (\ref{ineq1}) of Theorem~\ref{Thm1} instead of the sharper inequality given by (\ref{ineq1-sharp}) of Theorem~\ref{Thm-sharp}. This is because (\ref{ineq1}) is simpler.
However, the entropic inequalities of Sections \ref{sec:ref-AF} and \ref{sec:applications} can be easily strengthened by making use of Theorem~\ref{Thm-sharp} instead of Theorem~\ref{Thm1}. For example, the strengthened forms of (\ref{ineq1p}) of Theorem~\ref{thm-ref-AF}, (\ref{eq:continuity-conditional-entropy}) of Theorem~\ref{thm:continuity-conditional-entropy} and (\ref{REcont}) of Theorem~\ref{ThmContBoundRelEnt1Input}, are respectively given by
\begin{align*}
        &|S(\rho_1) - S(\rho_2)|  \\ & \qquad \leq \eps \log\big(d \max\{\lambda_{\max}(\rho_-), \lambda_{\max}(\rho_+)\} - 1\big) + c \, h\left(\frac{\eps}{c}\right) ; \\&|S(A|B)_{\rho_1} - S(A|B)_{\rho_2}| \le \eps \log(d_A^2 - 1) + c \, h\left(\frac{\eps}{c}\right) ;\\
        & |  D(\rho_1 \Vert \sigma) - D(\rho_2 \Vert \sigma) | \nonumber \\
       & \qquad \le \eps \log( e^{\max\{D_{\max}(\rho_+ \Vert \sigma), D_{\max}(\rho_-\Vert \sigma)\}} - 1) + c \, h\left(\frac{\eps}{c}\right), \, 
\end{align*}
where $c = \Tr(\rho_2 |_{\supp \rho_-})$ in all three cases.

\bigskip

\noindent \textbf{Note:} Our result, in the finite-dimensional setting, on the uniform continuity bound for the conditional entropy in the case of equal marginals (Theorem \ref{thm:continuity-conditional-entropy}) was obtained with different techniques by Berta, Lami, and Tomamichel~\cite{Berta.2024}.

\section*{Acknowledgments}

ND would like to thank Mohammad Alhejji for igniting her curiosity in the bound~(\ref{eq:conjecture}). She is grateful to Milan Mosonyi for the invitation to the {\em{Focused Workshop on R\'enyi Divergences}} at the Erd\"os Center in Budapest, in July 2024, during which the proof of Theorem~\ref{Thm1} was completed. It was also at that workshop that Peter Frenkel raised a question which led us to prove Theorem~\ref{Thm-sharp}, which is a sharper version of Theorem~\ref{Thm1}, so we would like to thank him too. ND would also like to thank Ludovico Lami, Mario Berta and Marco Tomamichel for helpful discussions in Budapest and Cambridge, where they shared their preliminary results with her, which motivated her and her coauthors  to prove Theorems~\ref{thm:continuity-conditional-entropy} and~\ref{ThmContBoundRelEnt1Input}. BB acknowledges support from the UK Engineering and Physical Sciences Research Council (EPSRC) under grant number EP/V52024X/1.
MGJ acknowledges support from the Fonds de la Recherche Scientifique – FNRS (Belgium) and by the Wiener-Anspach Foundation (Belgium), and is grateful for the hospitality of the Centre for Mathematical Sciences at the University of Cambridge where this work was performed during an extended research stay.
AC and PG acknowledge support from the Deutsche Forschungsgemeinschaft (DFG, German Research Foundation) – Project-ID 470903074 – TRR 352. For the purpose of open access, the authors have applied a Creative Commons Attribution (CC BY) license to any Accepted Manuscript version arising. The authors would like to thank Maksim Shirokov for the proof of the extension of our fundamental inequality (Theorem~\ref{Thm1}) to the infinite-dimensional setting (see Section~\ref{sec:infinite}).

\bibliographystyle{IEEEtran}
\bibliography{references}

\end{document}